\title{
    \MakeUppercase{Ptolemaic Indexing}
}
\author{Magnus Lie Hetland\/%
    \thanks{
        \affil{Dept.\ of Computer Science, Norwegian University of Science and
        Technology},
        \email{mlh@idi.ntnu.no}
    }
}
\begin{document}
\maketitle
\thispagestyle{firststyle}

\begin{abstract}
This paper discusses a new family of bounds for use in similarity search,
related to those used in metric indexing, but based on Ptolemy's inequality,
rather than the metric axioms. Ptolemy's inequality holds for the well-known
Euclidean distance, but is also shown here to hold for quadratic form metrics
in general, with Mahalanobis distance as an important special case.
The inequality is examined empirically on both synthetic and real-world data
sets and is also found to hold approximately, with a very low degree of error,
for important distances such as the angular pseudometric and several $L_p$
norms. Indexing experiments demonstrate a highly increased filtering power
compared to existing, triangular methods. It is also shown that combining the
Ptolemaic and triangular filtering can lead to better results than using
either approach on its own.
\end{abstract}

\section{Introduction}

In similarity search, data objects are retrieved based on their similarity to
a query object; as for other modes of information retrieval, the related
indexing methods seek to improve the efficiency the search. Two approaches
seem to dominate the field: spatial access
methods~\citep{Gaede:1998,Manolopoulos:2006,Samet:2006}, based on coordinate
geometry, and metric access
methods~\citep{Chavez:2001,Hjaltason:2003,Zezula:2006,Hetland:2009}, based on
the metric axioms. Similarity retrieval with spatial access methods is often
restricted to $L_p$ norms, or other norms with predictable behavior in
$\R^k\!$,
while metric access methods are designed to work with a broader class of
distances---basically any distance that satisfies the triangular inequality.
This gives the metric approach a wider field of application, by forgoing some
assumptions about the data, which in some cases results in lower performance.
Interestingly, even in cases where spatial access methods are applicable,
metric indexing may be superior in dealing with high-dimensional data, because
of its ability to bypass the so-called representational dimensionality and
deal with the intrinsic dimensionality of the data directly~\mbox{\citep[see,
e.g.,][]{Fu:2000}}.

It seems clear that there are advantages both to making strong assumptions
about the data, as in the spatial approach, and in working directly with the
distances, as in the metric approach. The direction taken in this paper is to
apply a new set of restrictions to the distance, separate from the metric
axioms, still without making any kind of coordinate-based assumptions. The
method that is introduced is shown to have a potential for highly increased
filtering power, and while it holds for the square root of any metric, it is
also shown analytically to apply to quadratic form metrics in general (with
Euclidean distance as an important special case), and empirically, as an
approximation with a very low degree of error, to important cases such as the
angular pseudometric, edit distance between strings, and several $L_p$ norms.

\bigskip

\noindent
This paper is a revised version of a preprint from~2009~\citep{Hetland:2009}.
While I have edited the paper for clarity, and at times adjusted the emphasis
on various topics, the substance remains the same. Since the original was
published, a few dozen papers have appeared that discuss or build on its
ideas, and I refer to a couple of the more relevant of these publications
throughout. The main contributions of the paper may be summed up as follows.

\begin{stmts}
\item It gives an example of indexing that goes beyond the established metric
    axioms, and demonstrates that doing so can have merit, poining the way to
    a broader field of exploration.
\item It introduces a filtering condition where the number of available bounds
    grows quadratically with the number of sample objects, and thus with the
    amount of space and the number of distance computations needed to perform
    the filtering. It also shows that this increase leads directly to an
    increase in filtering power. This introduces a tradeoff between the number
    of bounds and the number of false positives---a tradeoff that depends on
    the cost of the distance function.
\item It shows how to take the bounds beyond basic pivot filtering and apply
    them to more general index structures, potentially permitting more
    realistic, large-scale retrieval systems to be built.
\item It shows the equivalence of quadratic form metrics and Ptolemaic
    norm metrics (see Theorem~\ref{thm:qf}).
\end{stmts}

\noindent
The first point has already led others to look for axiom sets tailored to
specific data, for example~\citep[c.f.,][]{Bartos:2013,Bartos:2013b}. The
second point has led to indexing methods for the highly expensive
\emph{signature quadratic form distance}---methods that clearly beat the
competing metric indexes, as measured in actual search
time~\citep{Lokoc:2011}. The third point has been built upon to construct
fully formed, competetive index structures~\citep{Hetland:2013}. Finally, the
equivalence theorem has laid the foundations for the proof that the signature
quadratic form distance is a Ptolemaic metric~\citep{Lokoc:2011,Hetland:2013}.

This, then, is not primarily a paper on algorithm engineering; I do not
attempt to fully design specific index structures, with specific heuristics
for index traversal, or to perform exhaustive comparative
experiments; these are directions of research that have come to fruition
elsewhere.
Here, I simply aim to present the basic idea of Ptolemaic indexing, and
to demonstrate that it holds promise for wresting additional filtering power
from a given set of sample objects. And though I touch upon spaces that are
only partially Ptolemaic, in the interest of mapping out the territory, my
focus here is primarily on increased filtering power for exact search.

\section{Basic Concepts}\label{sec:basic}

The \emph{similarity} in similarity retrieval is usually formalized,
inversely, using a \defn{dissimilarity function}, a nonnegative real-valued
function $d(\cdot\,,\cdot)$ over some universe of objects, $\U$. For
typographic convenience, I will generally follow the
old-fashioned convention\footnote{See, for example, \citet{Wilson:1935}.}
of abbreviating $d(x,y)$ as $xy$.
A distance query consists of a query object $q$, and some form of distance
threshold, either given as a range (radius) $r$, or as a neighbor count $k$.
For the range query, all objects $o$ for which $qo\leq r$ are returned; for
the $k$-nearest-neighbor query ($k$NN), the $k$ nearest neighbors of $q$ are
returned, with ties broken arbitrarily. The $k$NN queries can be implemented
in terms of range queries in a quite general manner~\citep{Hjaltason:2003}.

For a function to qualify as a dissimilarity function (or \defn{premetric})
the value of $xx$ must be zero. It is generally assumed (mainly for
convenience) that the dissimilarity is \defn{symmetric} ($xy=yx$, giving
us a \defn{distance}) and \defn{isolating} ($xy=0\Leftrightarrow x=y$,
yielding a \defn{semimetric}). From a metric indexing perspective, it is
most crucial that the distance be
\defn{triangular}
(obeying the triangular inequality, $xz\leq xy+yz$). A \defn{metric} is
any symmetric, isolating, triangular dissimilarity function. Metrics are
related to the concept of \defn{norms}: A \defn{norm metric} is a metric
of the form $xy=\norm{x-y}$, where $x$ and $y$ are vectors and
\norm{\,\cdot\,} is a norm.

Many distances satisfy the metric properties, including several important
distances between sets, strings and vectors, and metric indexing is the main
approach in distance-based retrieval. The metricity, and in particular
triangularity, is exploited to construct lower bounds for efficient filtering
and partitioning. As discussed in an earlier tutorial~\citep{Hetland:2009},
there are several ways of using the triangular inequality in metric indexing,
usually by pre-computing the distances between certain sample objects (called
\defn{pivots} or \defn{centers}, depending on their use) and the other objects
in the data set. By computing the distances between the query and the same
sample objects (or some of them), triangularity can be used to filter out
objects that clearly do not satisfy the criteria of the query.

Even though most distance-based indexing has centered on the metric axioms,
this paper focuses on another property, known as \defn{Ptolemy's inequality},
which states that
\begin{equation}
xv\cdot yu \leq xy\cdot uv + xu\cdot yv\,,\label{eq:ptol}
\end{equation}
for all objects $x, y, u, v$. Premetrics satisfying this inequality are called
\defn{Ptolemaic}.\footnote{Note that Ptolemy's inequality neither implies
nor is implied by triangularity in general~\citep{Schoenberg:1940}.} In the
terms of the Euclidean plane: For any quadrilateral, the sum of the pairwise
products of opposing sides is greater than or equal to the product of the
diagonals (see \fig~\ref{fig:inequalities}).

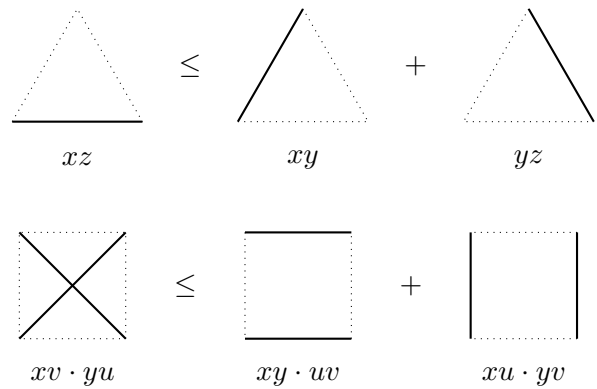
\begin{figure}[t]
\begin{center}
\subfigure{\begin{tikzpicture}
\begin{scope}
\path (-30:1) coordinate (z)
      (90:1) coordinate (y)
      (210:1) coordinate (x);
      \draw[thick] (x) -- (z);
\draw[thin,dotted] (x) -- (y) -- (z);
\path ($(x)!.5!(z)$) coordinate (xz);
\draw (xz) ++(0,-.5) node {$xz$};
\draw ($(xz)!.5!(y)$) ++(1.5,0) node {$\leq$};
\end{scope}
\begin{scope}[shift={(3,0)}]
\path (-30:1) coordinate (z)
      (90:1) coordinate (y)
      (210:1) coordinate (x);
\draw[thin,dotted] (x) -- (z) -- (y);
\draw[thick] (x) -- (y);
\path ($(x)!.5!(z)$) coordinate (xz);
\draw (xz) ++(0,-.5) node {$xy$};
\draw ($(xz)!.5!(y)$) ++(1.5,0) node {$+$};
\end{scope}
\begin{scope}[shift={(6,0)}]
\path (-30:1) coordinate (z)
      (90:1) coordinate (y)
      (210:1) coordinate (x);
\draw[thin,dotted] (y) -- (x) -- (z);
\draw[thick] (y) -- (z);
\draw ($(x)!.5!(z)$) ++(0,-.5) node {$yz$};
\end{scope}
\end{tikzpicture}}\\[5mm]
\subfigure{\begin{tikzpicture}
\begin{scope}
\path (-45:1) coordinate (a)
      (45:1) coordinate (b)
      (135:1) coordinate (c)
      (225:1) coordinate (d);
\draw[thick] (a) -- (c) (b) -- (d);
\draw[thin,dotted] (a) -- (b) -- (c) -- (d) -- cycle;
\draw ($(a)!.5!(d)$) ++(0,-.5) node {$xv\cdot yu$};
\end{scope}
\draw (1.5,0) node {$\leq$};
\begin{scope}[shift={(3,0)}]
\path (-45:1) coordinate (a)
      (45:1) coordinate (b)
      (135:1) coordinate (c)
      (225:1) coordinate (d);
\draw[thin,dotted]
    (a) -- (b) (c) -- (d);
\draw[thick] (a) -- (d) (b) -- (c);
\draw ($(a)!.5!(d)$) ++(0,-.5) node {$xy\cdot uv$};
\end{scope}
\draw (4.5,0) node {$+$};
\begin{scope}[shift={(6,0)}]
\path (-45:1) coordinate (a)
      (45:1) coordinate (b)
      (135:1) coordinate (c)
      (225:1) coordinate (d);
\draw[thin,dotted]
    (a) -- (d) (b) -- (c);
\draw[thick] (a) -- (b) (c) -- (d);
\draw ($(a)!.5!(d)$) ++(0,-.5) node {$xu\cdot yv$};
\end{scope}
\end{tikzpicture}}
\caption{An illustration of the triangle inequality (top) and Ptolemy's
inequality (bottom).}\label{fig:inequalities}
\end{center}
\end{figure}

It is a well-known fact that Euclidean distance is
Ptolemaic~\citep{Deza:2006}. There are many proofs for this, some quite
involved, but it can be shown quite simply,
using the idea of \emph{inversion}~\citep{Smith:1994}. A sketch of
the argument is given in \fig~\ref{fig:inv}.
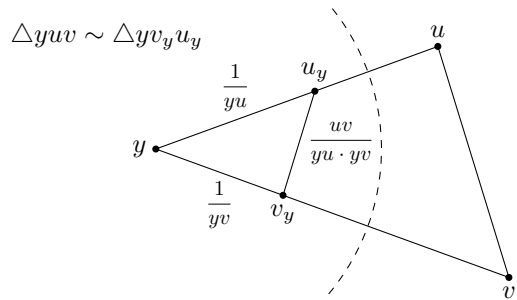
\begin{figure}
\begin{center}
\begin{tikzpicture}[scale=1.5]
\small
\def\r{0.75pt}
\path (0,0)         coordinate (y)
      (20:1.5cm)    coordinate (u_y)
      (20:2.66cm)   coordinate (u)
      (-20:1.2cm)   coordinate (v_y)
      (-20:3.33cm)  coordinate (v)
      ;
\draw[dashed] (y) + (-40:2cm) arc (-40:40:2cm);
\filldraw \foreach \p in {u,u_y,y,v,v_y} {
    (\p) circle (\r)
};
\draw (y)   node[left]  {$y$}
      (u_y) node[above] {$u_y$}
      (u)   node[above] {$u$}
      (v_y) node[below] {$v_y$}
      (v)   node[below] {$v$};
\draw (u_y) -- (u) -- (v) -- (v_y);
\def\s{0.8}
\draw (y) -- (u_y)
      node[midway,above] {\scalebox{\s}
      {$\displaystyle\frac{1}{yu}$}}
      -- (v_y)
      node[midway,right] {\scalebox{\s}
      {$\displaystyle\frac{uv}{yu\cdot yv}$}}
      -- (y)
      node[midway,below] {\scalebox{\s}
      {$\displaystyle\frac{1}{yv}$}}
      ;
\draw (.5cm,1cm) node[left] {$\bigtriangleup yuv\sim \bigtriangleup yv_yu_y$};
\end{tikzpicture}
\caption{Getting Ptolemy's inequality from the triangular inequality through
inversion in Euclidean space: The points are inverted with respect to $y$,
with an inversion radius of unity (dashed). Adding a third point, $x$, and
applying the triangular inequality to $u_y$, $v_y$ and $x_y$ yields Ptolemy's
inequality for $x$, $y$, $u$ and $v$.}\label{fig:inv}
\end{center}
\end{figure}
What is perhaps
less well known is that there is a natural connection between Ptolemaic
metrics on vector spaces and a generalization of Euclidean distance---a family
of distances collectively referred to as \defn{quadratic form distance}. A
quadratic form distance may be expressed as

\[
d(\vx,\vy) = \sqrt{\textstyle{\sum_{i=1}^n\sum_{j=1}^n
                a_{ij}(x_i-y_i)(x_j-y_j)}}\,,
\]
or, in matrix notation, $\sqrt{\vz'\!\A\vz}$, where $\vx$ and $\vy$ are
vectors, and $\vz=\vx-\vy$. The weight matrix $\A=[a_{ij}]$ is a measure of
``unrelatedness'' between the dimensions, which uniquely defines the distance.
We can, without loss of generality, assume that \A is symmetric, as any
antisymmetries will have no bearing on the distance~\citep{Hafner:1995}. In
order for the distance to be a metric, \A must also be
positive-definite.\footnote{Some sources require only
positive-semidefiniteness~\citep[e.g.,][]{Zezula:2006}, but this would result
in a pseudometric, allowing a distance of zero between different objects. It
is possible to relax the requirement on \A by adding requirements to the
inputs~\citep{Hafner:1995}.}

Quadratic form distances take into account possible correlations between the
dimensions of the vector space, which makes them especially suited for
comparing histograms~\citep[see, e.g.,][]{Bernas:2008}. For example,
when comparing color histograms, it might be natural to compare the bin (that
is, dimension) for orange to those of red and yellow~\citep{Hafner:1995}. If
\A is restricted to a diagonal matrix, a weighted Euclidean distance results,
with the identity matrix leading to ordinary Euclidean distance (see
\fig~\ref{fig:balls}). An important kind of quadratic form distance is the
Mahalanobis distance, where \A is normally set to the inverse of the
covariance matrix of a data sample.

The fact that this important family of distances is usable with the techniques
presented in this paper is expressed in the following theorem.

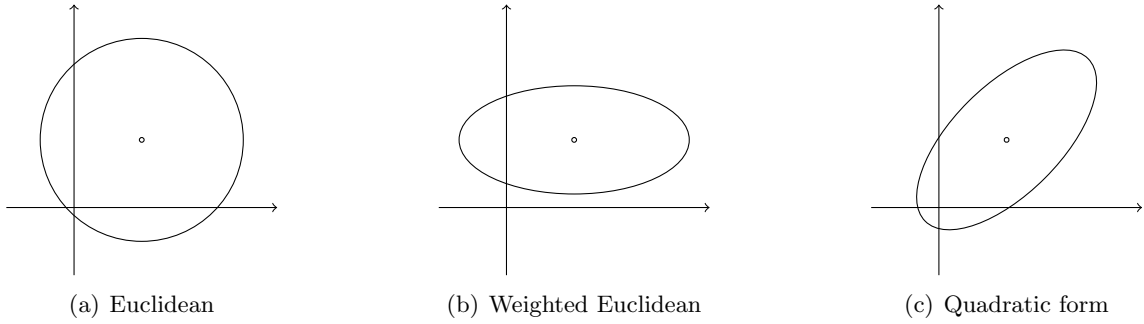
\begin{figure}
\def\figscale{.9}
\subfigure[Euclidean]{\begin{tikzpicture}[scale=\figscale]
\draw[->] (0,-1) -- (0,3);
\draw[->] (-1,0) -- (3,0);
\draw (1,1) circle (1.5) circle (1pt);
\end{tikzpicture}}
\hfill
\subfigure[Weighted Euclidean]{\begin{tikzpicture}[scale=\figscale]
\draw[->] (0,-1) -- (0,3);
\draw[->] (-1,0) -- (3,0);
\draw (1,1) ellipse (1.7 and 0.8) circle (1pt);
\end{tikzpicture}}
\hfill
\subfigure[Quadratic form]{\begin{tikzpicture}[scale=\figscale]
\draw[->] (0,-1) -- (0,3);
\draw[->] (-1,0) -- (3,0);
\draw (1,1) [rotate around={45:(1,1)}] ellipse (1.7 and 0.8) circle (1pt);
\end{tikzpicture}}
\caption{Metric balls for three related distance types in
$\R^2$.}\label{fig:balls}
\end{figure}

\begin{theorem}\label{thm:qf}
A distance function on $\R^n$ is a quadratic form metric if and only if it is
a Ptolemaic norm metric.
\end{theorem}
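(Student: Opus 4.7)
The plan is to split the biconditional into its two natural directions, with the forward implication being a short algebraic reduction to the Euclidean case and the backward implication resting on the classical Jordan–von~Neumann characterization of inner-product spaces.

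For the forward direction, suppose $d(\vx,\vy)=\sqrt{\vz'\A\vz}$ with $\vz=\vx-\vy$ and $\A$ symmetric positive-definite. Since $\A$ is symmetric positive-definite, it admits a factorization $\A=\B'\B$ with $\B$ invertible (Cholesky, or the matrix square root). Then $d(\vx,\vy)=\|\B\vx-\B\vy\|_2$, so $d$ is the pullback of the Euclidean distance under the invertible linear map $\vx\mapsto\B\vx$. Ptolemy's inequality for four points $\vx,\vy,\vu,\vv$ under $d$ is therefore just the Euclidean Ptolemy inequality for $\B\vx,\B\vy,\B\vu,\B\vv$, which is known to hold~\citep{Deza:2006}. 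Moreover, $d$ is a norm metric with norm $\|\vv\|_\A=\sqrt{\vv'\A\vv}$. Hence every quadratic form metric is a \ptolemaic{} norm metric.

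For the backward direction, assume $d(\vx,\vy)=\|\vx-\vy\|$ is a \ptolemaic{} norm metric on $\R^n$. The goal is to exhibit a symmetric positive-definite $\A$ with $\|\vv\|^2=\vv'\A\vv$. By the Jordan–von~Neumann theorem, it suffices to verify the parallelogram identity
\[
\|\vu+\vv\|^2+\|\vu-\vv\|^2 = 2\|\vu\|^2+2\|\vv\|^2,
\]
since a norm satisfying it is induced by an inner product $\langle\cdot\,,\cdot\rangle$, and on a finite-dimensional space every inner product has the form $\langle\vx,\vy\rangle=\vx'\A\vy$ for some symmetric positive-definite $\A$; then $\|\vv\|^2=\vv'\A\vv$ and $d$ is a quadratic form metric. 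To derive the parallelogram law from Ptolemy, I would apply the inequality~(\ref{eq:ptol}) to carefully chosen symmetric configurations, such as placing the four points at $\vu,\vv,-\vu,-\vv$ and at rescaled variants; suitable combinations of the resulting quadratic inequalities (and their counterparts obtained by permuting the roles of the points, which in a norm metric preserves the LHS/RHS structure only up to relabelling the diagonals) pinch the two sides of the parallelogram identity together. This is the route taken in Schoenberg's classical characterization of \ptolemaic{} normed spaces as inner-product spaces.

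The main obstacle is squarely this last step: extracting the equality of the parallelogram law from a one-sided inequality. A single application of~(\ref{eq:ptol}) yields only an inequality of the form $\|\vu-\vv\|^2\le 4\|\vu\|\,\|\vv\|+\|\vu+\vv\|^2$, which is strictly weaker than what is needed; the reverse direction, together with elimination of the mixed product $\|\vu\|\,\|\vv\|$, requires exploiting the freedom of choosing the ``diagonals'' in the Ptolemaic quadrilateral and passing to limits along a one-parameter family of rescalings. If a self-contained derivation proves too lengthy, I would simply invoke Schoenberg's theorem as a cited result and note that finite-dimensionality then converts the abstract inner product into the matrix form $\vx'\A\vy$ required by the quadratic-form definition.
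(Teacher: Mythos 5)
Your overall decomposition is sound, and with your stated fallback it lands on essentially the same proof as the paper, which handles both directions in one stroke by (i) observing that a symmetric positive-definite $\A$ makes $\vz'\!\A\vz$ an inner product, so quadratic form metrics and inner-product-norm metrics coincide as a class, and (ii) citing Schoenberg's characterization that a norm metric is \ptolemaic{} if and only if its norm is an inner product norm. Your forward direction is a nice, more elementary alternative: factoring $\A=\mathbf{B}'\mathbf{B}$ and pulling the Euclidean metric back through $\vx\mapsto\mathbf{B}\vx$ avoids invoking one half of Schoenberg's theorem entirely and reduces everything to the classical Euclidean Ptolemy inequality.

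The gap is in your attempted self-contained backward direction. The configuration $u,v,-u,-v$ does not pinch the parallelogram identity: with the diagonals taken as $\norm{u+v}$ and $\norm{v+u}$, Ptolemy's inequality gives
\[
\norm{u+v}^2 \;\leq\; \norm{u-v}^2 + 4\norm{u}\,\norm{v}\,,
\]
which is the analogue of Cauchy--Schwarz (in an inner product space the left side minus the first right-hand term equals $4\inner{u}{v}$), not the parallelogram law; permuting the roles of the four points in this symmetric configuration yields only the mirror-image inequality and trivialities, and rescaling does not eliminate the mixed product $\norm{u}\norm{v}$ in favour of $2\norm{u}^2+2\norm{v}^2$ as an \emph{equality}. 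Extracting the Jordan--von~Neumann identity from the one-sided \ptolemaic{} condition is precisely the nontrivial content of Schoenberg's 1952 theorem, and no short pinching argument of the kind you sketch is known to replace it. So you should take your own fallback and cite \citet{Schoenberg:1952}, as the paper does; your finite-dimensionality remark then correctly converts the resulting abstract inner product into the matrix form $\vx'\!\A\vy$ with $\A$ symmetric positive-definite, completing the equivalence.
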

\begin{proof}
Let the quadratic form metric $d(\vx,\vy)$ be $\sqrt{\vz'\!\A\vz}$, where
$\vz=\vx-\vy$. Because \A is symmetric positive-definite, $\vz'\!\A\vz$
defines an inner product, making any such $d$ a norm metric based on an inner
product norm (a norm of the form $\norm{\vz}=\sqrt{\inner{\vz}{\vz}}$, where
\inner{\,\cdot\,}{\cdot\,} is an inner product). Note that the converse also
holds: Any inner product norm distance on $\R^n$ can be expressed as a
quadratic form metric (with a positive-definite \A).
It is known that a norm metric is Ptolemaic if and only if its norm is an
inner product norm~\citep{Schoenberg:1952,Deza:2006}, which gives us the
theorem.
\end{proof}

\noindent
As pointed out by \citet{Skopal:2011}, if $\A$ is static, a
distance-preserving transform into Euclidean space can be used to eliminate
the quadratic form from the search. This is not possible, however, if $\A$ is
constructed on demand, as in the \emph{signature} quadratic form distance, for
example~\citep{Beecks:2010}. Computing such distances will tend to be
expensive, making them prime candidates for Ptolemaic indexing.

Beyond quadratic forms, there are certainly many other Ptolemaic metrics (with
the discrete metric, $d(x,y)=1\Leftrightarrow x\neq y$, as an obvious
example, and the chordal metric on the unit Riemann sphere as a possibly
less obvious one~\citep{Malesevic:2006}). In fact, for any metric $d$, the
metric $\sqrt{d(\,\cdot\,)}$ is Ptolemaic~\citep{Foertsch:2006}. In terms of distance
orderings, and therefore similarity queries, this new metric is equivalent to
the original, meaning that the techniques in this paper are applicable to all
metrics. However, the transform will
increase its intrinsic dimensionality~\citep{Skopal:2007}, generally making
the process of indexing harder. As shown in \sect~\ref{sec:approx}, several
non-Ptolemaic metrics may be ``sufficiently Ptolemaic'' to be used with
the Ptolemaic indexing techniques \emph{without} such a transform, although
only for approximate queries.

\section{Related Work}

As discussed at length
elsewhere~\citep{Chavez:2001,Hjaltason:2003,Samet:2006,Zezula:2006,Hetland:2009},
there are many published methods that deal with indexing distances based on
the metric properties. While there has been an increasing focus on reducing
\textsc{i/o} and general \textsc{cpu} time, the primary aim of most
publications has been minimizing the number of distance computations, based on
the assumption that the distance is highly expensive to compute.\footnote{This
assumption may very well stem from the seminal work of \citet{Feustel:1982},
where calculation of the metric involved comparing every permutation of the
node sets of two graphs.} While focusing exclusively on this one performance
criterion may not be altogether realistic, yielding methods with
linear query time~\citep{Mico:1994,Pedreira:2007} or quadratic memory
use~\citep{Vidal:1986,Figueroa:2006}, for example, it has proven a useful
foundation on which methods with more nuanced performance properties could be
built~\citep[e.g.,][]{Zezula:1996,Dohnal:2004,Fredriksson:2006,Brisaboa:2008}.
This paper also focuses on minimizing distance computations, setting aside
related questions of algorithm engineering for later.

The main mechanism through which distance calculations may be avoided is
through various forms of \emph{filtering} or \emph{exclusion}, using lower
bounds~\citep{Hetland:2009}.\footnote{The converse, \emph{inclusion} using
upper bounds, is also possible, but less frequently useful, simply because
most of the data should normally be excluded from the result set.} If the
structure of the data objects is known, then very precise, yet cheap, lower bounds
can be constructed~\citep{Hetland:2004}, but for metric indexing, only the
general properties of the distance may be used. As described in
\sect~\ref{sec:basic}, this is done by storing precomputed distances, or
ranges of distances, involving the data set and certain sample objects
(\emph{centers} or \emph{pivots}).\footnote{In the case of generalized
hyperplane indexing, the information stored is simply which center is closest
to a given object, and this is implicit in the structure.}

Rather than listing existing indexing methods, only the most relevant of the
basic principles will be addressed here. Two rather general theorems contain
the majority of the indexing principles as special cases. The first of these,
dealing with metric balls, is given here without proof. The second, dealing
with generalized hyperplanes, as well as more details and proofs for both
theorems can be found in the aforementioned tutorial~\citep{Hetland:2009}, the
survey by \citet{Hjaltason:2003} or the textbook by \citet{Zezula:2006}. In
the following, let $o$, $p$ and $q$ be objects in a universe \U, and let the
implicit distance $d$ be a metric over \U.

\begin{theorem}\label{thm:ball}
The value of $qo$ may be bounded as
\[
\max\{
\lo{op}-\hi{qp},
\lo{qp}-\hi{op}
\}\leq qo \leq \hi{qp} + \hi{op}\,,
\]
where $\lo{uv} \leq uv \leq \hi{uv}$, for any objects $u$, $v$ in \U. \qed
\end{theorem}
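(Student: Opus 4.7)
The plan is to derive the three bounds (one upper, two lower) directly from the triangular inequality applied to the triangle $q,o,p$, and then combine the lower bounds via a maximum. Since the bounds $\lo{uv}$ and $\hi{uv}$ are assumed to sandwich the actual distances, substituting them into the triangular inequalities will produce the claimed bracketing of $qo$.

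First I would handle the upper bound. The triangular inequality for the triple $q,p,o$ gives $qo \leq qp + po$. Using symmetry ($po=op$) and substituting the stored upper bounds $qp \leq \hi{qp}$ and $op \leq \hi{op}$ yields $qo \leq \hi{qp} + \hi{op}$ immediately.

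Next I would obtain the two lower bounds by ``solving'' the other two triangular inequalities for $qo$. Applying triangularity with $o$ as the apex gives $op \leq oq + qp$, hence $qo \geq op - qp \geq \lo{op} - \hi{qp}$, where in the last step I used $op \geq \lo{op}$ and $qp \leq \hi{qp}$ (noting that $qp$ appears with a minus sign, so bounding it from above strengthens the inequality). Symmetrically, with $q$ as the apex one gets $qp \leq qo + op$, hence $qo \geq qp - op \geq \lo{qp} - \hi{op}$. Since both lower bounds are valid simultaneously, their maximum is also a valid lower bound, completing the statement.

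The proof is essentially routine; the only subtlety worth flagging is the direction in which the stored bounds are applied when a term appears negated, so I would be careful to note that upper bounds on the subtracted term and lower bounds on the added term yield the conservative (largest) valid lower bound. No positivity issues arise, because if the right-hand side of either lower bound becomes negative it is simply a vacuous but still correct bound on the nonnegative quantity $qo$.
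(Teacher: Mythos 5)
Your proof is correct. The paper states this theorem explicitly without proof (deferring to the cited surveys and textbook), and your derivation---three applications of the triangular inequality to the triple $q,p,o$, followed by substituting the stored bounds in the conservative directions and taking the maximum of the two valid lower bounds---is exactly the canonical argument those sources use, so there is nothing to fault here.
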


\noindent
The expressions $\lo{uv}$ and $\hi{uv}$ refer to known lower and upper bounds
for $uv$,
as, in some cases, the exact distances may not be known. For example, $p$
may be the center of a metric ball containing $o$, with covering radius $r$.
In that case $\lo{po}=0$ and $\hi{po}=r$. If the query--pivot distance is
known, we get the lower bound
\begin{equation}\label{eq:balllb}
qp-r \leq qo\,,
\end{equation}
which is exactly the bound used to check for overlap between a query ball and
a bounding ball in a metric tree, for example.

\section{Ptolemaic Pivot Filtering}\label{sec:pivots}

In a manner similar to Theorem~\ref{thm:ball}, Ptolemy's inequality may be
used to construct lower bounds for filtering. In the following, the technique
known as \defn{pivoting} is used. The derivation of a more general bound is
deferred to \sect~\ref{sec:general}.
Triangular pivoting is based on the following lower bound, a special case of
the one in Theorem~\ref{thm:ball}, where the distances are known exactly:
\begin{equation}
qo \geq \vert qp - op\vert\label{eq:piv}
\end{equation}
Here, $q$ is the query object, $o$ is a candidate result object, while $p$ is
a so-called \defn{pivot} object, whose function is to help construct the
bound.
This bound is sometimes known as the \emph{inverse} triangular inequality, and
follows from basic restructuring of the original:
\begin{align}
op + qo &\geq qp\nonumber\\
     qo &\geq qp - op\label{eq:first}\,,\\
\intertext{and, in the same manner,}
     qo &\geq op - qp\label{eq:second}\,.
\end{align}
Together, (\ref{eq:first}) and (\ref{eq:second}) lead directly to
(\ref{eq:piv}).
The bound is normally strengthened by using a set of several pivots, $P$:
\begin{equation*}
qo \geq \max_{p\,\in\,P}\, |qp - op|
\end{equation*}
A similar derivation can be made for Ptolemaic distances. In the following,
$q$, $p$, and $o$ retain their previous meaning, but we also add another pivot
object, $s$:
\begin{align}
qs\cdot op + qo\cdot ps &\geq qp\cdot os\nonumber\\
qo\cdot ps &\geq qp\cdot os - qs\cdot op\nonumber\\
qo &\geq (qp\cdot os - qs\cdot op)/ps\label{eq:ptfirst}
\end{align}
Here we can maximize over all pairs of pivots:
\begin{equation}
qo \geq \max_{p,s\,\in\,P}\,
\frac{qp\cdot os - qs\cdot op}{ps}\label{eq:pt}
\end{equation}
By exchanging $p$ and $s$ in (\ref{eq:ptfirst}), we could exchange the terms
in the numerator, allowing us to use the absolute value in (\ref{eq:pt}). This
would not strengthen the bound as it stands, but would allow us halve the
number of pivot pairs examined.

\noindent
For the normal pivoting bound to be useful, the pivot should be closer to
either the query or to the candidate object; the difference in the two
distances is what gives the bound its filtering power. For the Ptolemaic
pivoting bound, it seems that one way of getting good results would be to have
one pivot close to the query, while the other is close to the candidate
object, giving a high value for the numerator in (\ref{eq:pt}). However, this
intuition is tempered by the denominator, which dictates that the pivots
should also be close to each other. Invariably, the tradeoff here will need to
be based on empirical considerations.
As long as the distance matrix between the pivots is precomputed, the bound
can be computed for every pair of pivots, and the maximum used as the final
bound. As will be shown in \sect~\ref{sec:results}, this Ptolemaic pivoting
bound is a significant improvement over the classical triangular one.

The difference between triangular and Ptolemaic pivoting in the Euclidean
plane is illustrated in \fig~\ref{fig:heatmap}, where the ratio between bound
and distance from a query at $(-1,0)$, with the pivots placed at $(0,0)$ and
$(1,0)$, is plotted for each point, where zero (a non-informative lower bound)
is black and one (a perfect bound) is white.
While the the bound in~(\ref{eq:pt}) is the one examined further in this
paper, the ideas of Ptolemaic indexing have wider implications. In the
following section,
a more general theorem (Theorem~\ref{thm:ptball}) is given, which is a
Ptolemaic analogue of Theorem~\ref{thm:ball}.

\begin{figure}
\begin{center}
\subfigure[Triangular pivoting]{\begin{tikzpicture}
    \begin{axis}[enlargelimits=false,axis on top,small,
        scale only axis,
        height=5cm, width=5cm,
        xtick={0}, ytick={0},
        xticklabels={},yticklabels={},
        cycle list name=black white,
        ]%
        \addplot graphics
        [xmin=-3,xmax=3,ymin=-3,ymax=3]
        {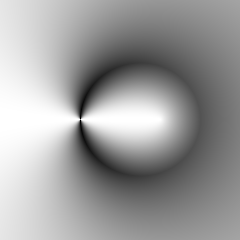};
        \addplot+[only marks,mark=asterisk]%
        coordinates
        {(0,0) (1,0)};
        \addplot+[only marks,mark=*]%
        coordinates
        {(-1,0)};
    \end{axis}
\end{tikzpicture}}
\hspace{.12\linewidth}
\subfigure[Ptolemaic pivoting]{\begin{tikzpicture}
    \begin{axis}[enlargelimits=false,axis on top,small,
        scale only axis,
        height=5cm, width=5cm,
        xtick={0}, ytick={0},
        xticklabels={},yticklabels={},
        cycle list name=black white
        ]
        \addplot graphics
        [xmin=-3,xmax=3,ymin=-3,ymax=3]
        {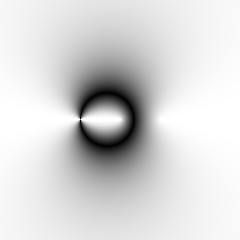};
        \addplot+[only marks,mark=asterisk]
        coordinates
        {(0,0) (1,0)};
        \addplot+[only marks,mark=*]%
        coordinates
        {(-1,0)};
    \end{axis}
\end{tikzpicture}}
\caption{Accuracy, measured as the ratio between the pivoting bound and the
    actual distance, from 0\%
The three points represent, from left to right, a query and two
pivots.}\label{fig:heatmap}
\end{center}
\end{figure}

\section{Generalizing the Ptolemaic Bound}\label{sec:general}

The following theorem generalizes the bound~(\ref{eq:pt}), as a Ptolemaic
analogue of Theorem~\ref{thm:ball}. This generalization is included as a
starting point for new indexing methods, and is not examined empirically in
this paper.

\begin{theorem}\label{thm:ptball}
Let $o$, $p$, $q$ and $s$ be objects in a universe \U, and let $d$ be a
Ptolemaic distance over \U. The value of $qo$ may then be bounded as
\[
\frac{1}{\hi{ps}}\cdot
\max\left\{\begin{array}{l}
\lo{qs}\cdot\lo{op}-\hi{qp}\cdot\hi{os},\\
\lo{qp}\cdot\lo{os}-\hi{qs}\cdot\hi{op}
\end{array}\right\}
\leq qo
\leq \frac{1}{\lo{ps}}\cdot (\hi{qp}\cdot\hi{os} + \hi{qs}\cdot\hi{op})\,,
\]
where $\lo{uv} \leq uv \leq \hi{uv}$, for any objects $u$, $v$ in
\U.%
\end{theorem}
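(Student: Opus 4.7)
The plan is to mirror the classical derivation of Theorem~\ref{thm:ball} but using Ptolemy's inequality in place of the triangular inequality. Ptolemy's inequality, applied to the four objects $q, o, p, s$, can be instantiated in three ways, depending on which pair of sides is treated as the ``diagonals'' of the quadrilateral:
\begin{align*}
qo\cdot ps &\leq qp\cdot os + qs\cdot op,\\
qp\cdot os &\leq qo\cdot ps + qs\cdot op,\\
qs\cdot op &\leq qo\cdot ps + qp\cdot os.
\end{align*}
Each of these isolates $qo\cdot ps$ in a different way, and the three rearrangements will together produce both halves of the theorem.

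For the upper bound, I would use the first inequality, dividing through by $ps$ to obtain $qo \leq (qp\cdot os + qs\cdot op)/ps$. Then, since all distances are nonnegative, replacing each factor in the numerator by its known upper bound, and $ps$ by its known lower bound, only enlarges the right-hand side, giving the claimed expression $(\hi{qp}\,\hi{os}+\hi{qs}\,\hi{op})/\lo{ps}$. For the lower bound, I would use the second and third Ptolemaic inequalities, each of which rearranges to a bound of the form $qo\cdot ps \geq A\cdot B - C\cdot D$. Replacing $A$ and $B$ by their lower bounds and $C$ and $D$ by their upper bounds can only decrease the right-hand side (again because all quantities are nonnegative), and dividing by $\hi{ps}\geq ps$ on the left gives a valid lower bound on $qo$. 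Taking the maximum of the two resulting bounds yields the $\max\{\cdot,\cdot\}$ in the statement.

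The one subtlety I would flag, and the only real obstacle, is the sign handling when dividing by $ps$. For the upper bound we need $\lo{ps}>0$; if $\lo{ps}=0$, the bound is vacuously $+\infty$ and nothing needs to be proved. For the lower bound we must verify that the inequality $qo\cdot ps \geq (\text{numerator})$ implies $qo \geq (\text{numerator})/\hi{ps}$ even when the numerator is negative: in that case the proposed bound is itself negative, so it is trivially majorized by $qo\geq 0$, and hence remains valid. Once this case analysis is made explicit, the two upper-bounding and one lower-bounding manipulations of Ptolemy's inequality combine to give exactly the bounds stated.
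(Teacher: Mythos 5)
Your proposal is correct and follows essentially the same route as the paper: both halves are obtained by instantiating Ptolemy's inequality on the quadruple $q,o,p,s$ with each of the three products in turn playing the role of the ``diagonal'' term, rearranging, and then monotonically substituting upper and lower limits (exactly the two orderings that give the two cases of the $\max$). Your explicit treatment of the division-by-$ps$ edge cases and of a possibly negative numerator is slightly more careful than the paper's terse argument, but it is the same proof.
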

\begin{proof}

The two cases of the lower bound correspond to the two possible orderings of
the products in the numerator of the lower bound (\ref{eq:pt}), both of which
are permissible. The only change here is that instead of the exact values, we
use upper and lower limits. The lower limits occur before the subtractions,
while the upper limits occur after, as well as in the denominator. Given that
all distances are non-negative, the lower (resp.\@ upper) bounds also provide
lower (resp.\@ upper) bounds on the products. Therefore, these substitutions can
only lower the value of the bound, and hence the inequality still holds. The
upper bound follows from the Ptolemaic inequality $$ps\cdot qo\leq qp\cdot
os+qs\cdot op\,.$$ Dividing by $ps$ and safely substituting upper limits in
the numerator and a lower limit in the denominator, we arrive at the upper
bound.
\end{proof}

\noindent
The applications of this theorem to pivot filtering have already been
discussed in \sect~\ref{sec:pivots}. However, its metric analogue
Theorem~\ref{thm:ball} is also used for overlap checking with balls and
shells, which is what the upper and lower limits to the distances represent.
The notion of overlapping metric balls is inherently triangular, and does not
directly translate to Ptolemaic distances. We are still able to exploit
similar information, but we are in the somewhat unusual situation of working
with two balls (or shells) at once. Take the upper bound, in the case where we
know $qp$ and $qs$ (and, of course, $ps$). In order to apply this bound (for
automatic inclusion of $o$), we would have to know that $o$ falls inside two
balls, one around $p$ and one around $s$, with radii $\hi{op}$ and $\hi{os}$,
respectively.
While this sort of ``double containment'' is not the norm in current metric
indexing methods, it is certainly not impossible to implement. One would
simply need to let each region be represented by two distance balls, and
maintain two covering radii---one for each center. There is an inherent
tradeoff here: If the objects are far apart, the covering radii will
necessarily become quite large; however, if we move them closer, the upper
bound will increase.

In considering the \emph{lower} bound, we see something interesting: If we
envision a structure with covering radii around both $p$ and $s$, the lower
limits $\lo{op}$ and $\lo{os}$ both become zero, leaving us also with a total
lower bound of zero. We see that, as for the pivot filtering case, we may need
for one of the pivots to be more ``query-like,'' and the pivots need to be
different from each other. For example, if the query is close to $p$ but far
from $s$, and the converse holds for the object, the lower bound will be high.

However, this may not be enough. Consider the case where the object falls
within a ball with radius $r$ around $p$ (giving us $\lo{os}=ps-r$); we then
get the following variation of the lower bound from (\ref{eq:balllb}):

\[
qo \geq \frac{1}{ps}\cdot (qp\cdot (ps - r) - r\cdot qs) = qp -
r\cdot\frac{qp+qs}{ps}
\]
The only difference from the triangular condition is the scaling factor
$(qp+qs)/ps$, which we can see as regulating the influence of the radius. If
the query lies directly between $p$ and $s$ (that is, $ps=qp+qs$) this new
bound is, in fact, exactly equivalent to the triangular one. However, for all
other cases, the new bound is \emph{worse}.

What's missing is the ``other half,'' as it were: an \emph{inverted} ball
around $s$, excluding $o$, giving us a proper \lo{os}, in addition to the
covering radius, \hi{op}. This would be available in a situation where we have
covering shells from each pivot to its sibling regions (as in
\textsc{gnat}~\citep{Brin:1995} and its descendants, or the PM-Tree, which
uses subtree shells around global pivots~\citep{Skopal:2004a}), or where we
use ``inside/outside'' partitioning with multiple pivots simultaneously (as in
D-Index~\citep{Dohnal:2003}). In such cases, where both upper and lower bounds
are available for both $op$ and $os$, the bound in Theorem~\ref{thm:ptball}
can be used directly, substituting exact values for $ps$, $qs$
and~$qp$.\footnote{This, in fact, is the technique used in implementing the
Ptolemaic PM-Tree~\citep{Hetland:2013}.} See \fig~\ref{fig:shells} for an
example.

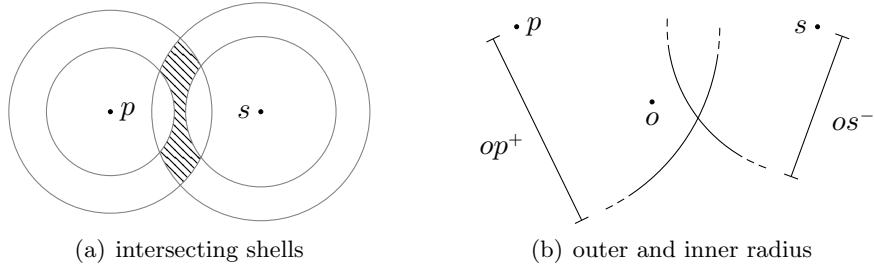
\begin{figure}
\begin{center}
\subfigure[intersecting shells]{
\begin{tikzpicture}[scale=.5]
\def\r{1.5pt}
\path (-2,0)         coordinate (p)
      (+2,0)         coordinate (s)
      ;

\begin{scope}
    \path[clip]
        (p) circle (2.7)
    ;
    \path[clip]
        (s) circle (2.9)
    ;
\fill[pattern=north west lines] (-10,-10) rectangle (10,10);
\fill[white]
    (p) circle (1.7)
    (s) circle (2);
\end{scope}

 \draw (p)   node[right]  {$p$}
       (s)   node[left] {$s$}
       ;
\filldraw \foreach \p in {p,s} {
    (\p) circle (\r)
};

\draw[gray]
    (p) circle (2.7)
    (p) circle (1.7)
    (s) circle (2)
    (s) circle (2.9)
    ;
\end{tikzpicture}
}
\qquad
\subfigure[outer and inner radius]{
\begin{tikzpicture}
\def\r{0.75pt}
\path (-2,0)         coordinate (p)
      (+2,0)         coordinate (s)
      (-.2,-1)         coordinate (o)
      ;
\filldraw \foreach \p in {p,s,o} {
    (\p) circle (\r)
};
 \draw (p)   node[right]  {$p$}
       (s)   node[left] {$s$}
       (o)   node[below] {$o$}
       ;
\draw
    (p) + (-55:2.7) arc (-55:-10:2.7)
    ;
\draw[densely dashed]
    (p) + (-64:2.7) arc (-64:-55:2.7)
    (p) + (-10:2.7) arc (-10:0:2.7);
    ;
\draw
    (s) + (240:2) arc (240:190:2);
\draw[densely dashed]
    (s) + (250:2) arc (250:240:2)
    (s) + (190:2) arc (190:176:2)
    ;

\draw[|-|] (p) ++(-154:10pt) -- +(-64:2.7);
\path (p) ++(-154:25pt) -- +(-64:1.35)
    node {$\hi{op}$};
\draw[|-|] (s) ++(340:10pt) -- +(250:2);
\path (s) ++(340:25pt) -- +(250:1)
    node {$\lo{os}$};
    ;
\end{tikzpicture}
}
\end{center}
\caption{Two intersecting shells of a PM-Tree or some similar structure. Both
shells contain the objects of the region, se we are free to use the upper
radius of one and the lower of the other for added Ptolemaic
filtering.}\label{fig:shells}
\end{figure}

Finally, for Ptolemaic \emph{metrics}, it is possible to combine the upper or
lower bounds generated by Ptolemaic pivoting with various metric regions.
This is similar to the technique used in such metric index structures as the
CM-Tree~\citep{Aronovich:2007} and \textsc{tlaesa}
\citep{Mico:1996,Tokoro:2006}, where a lower bound to the center object is
used in ball overlap checking, and in Hybrid Dynamic
SA-Tree~\citep{Arroyuelo:2003}, where the same approach is taken in hyperplane
filtering.
The idea is to be able to filter out an entire region without computing the
distance to its representative objects (such as centers or vantage points).
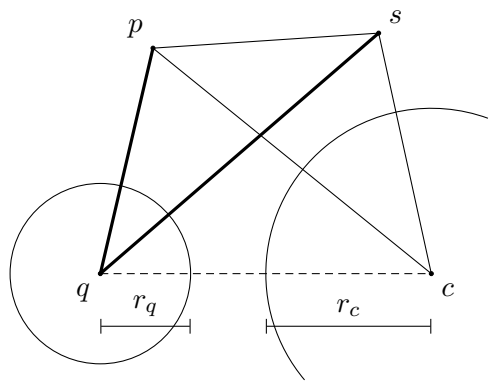
\begin{figure}
\begin{center}
\begin{tikzpicture}
\def\r{0.75pt}
\path (-1.5,+1.0) coordinate (p)
      (+1.5,+1.2) coordinate (s)
      (-2.2,-2) coordinate (q)
      (+2.2,-2) coordinate (c)
      ;
\filldraw \foreach \p in {p,s,q,c} {
    (\p) circle (\r)
};
\draw
    (p) node[above left] {$p$}
    (s) node[above right] {$s$}
    (q) node[below left] {$q$}
    (c) node[below right] {$c$}
    ;
\draw[very thick] (q) -- (p) (q) -- (s);
\draw
    (p) -- (s) -- (c)
    (p) -- (c)
    ;
\draw
    (q) circle (1.2)
    ;
\draw
    (c) +(220:2.2) arc (220:70:2.2)
    ;
\draw[|-|]
    (q) ++(0,-.7) -- +(1.2,0)
    node[midway, above] {$r_q$}
    ;
\draw[|-|]
    (c) ++(0,-.7) -- +(-2.2,0)
    node[midway, above] {$r_c$}
    ;
\draw[densely dashed] (q) -- (c);
\end{tikzpicture}
\end{center}
\caption{Computing pivot distances (heavy) at the start of the search
potentially lets us eliminate regions without examining their
centers. The available distances (solid) combine to form a lower bound on $qc$;
if this bound is greater than $r_q + r_c$, overlap is impossible.}
\label{fig:doublebound}
\end{figure}
For example, for a ball with center $c$ and covering radius $r_c$, as well as
a query radius $r_q$ the normal discard criterion would be $r_q < qc-r_c$.
Instead, with the pivots $p$ and $s$, we could replace $qc$ with
a Ptolemaic lower bound (see \fig~\ref{fig:doublebound}). The criterion then
becomes
$$
r_q < \frac{qp\cdot cs - qs\cdot cp}{ps} - r_c\,.
$$
For a hyperplane defined by the two centers $u$ and $v$, the region
corresponding to $v$ can normally be excluded if $2r_q < qv-qu$. Again, we can
avoid calculating the distance $qv$ by using a lower bound in its place, with
the criterion
$$
2r_q < \frac{qp\cdot vs - qs\cdot vp}{ps} - qu\,.
$$
In both cases, multiple pivots would, of course, strengthen the bounds.
Similar techniques could also be used with metric shells, or even the vantage
point regions of the VP-Tree~\citep{Uhlmann:1991,Yianilos:1993} and the
D-Index~\citep{Dohnal:2003}.

\section{Experimental Results}\label{sec:results}

Two sets of experiments have been performed to explore the potential
usefulness of Ptolemy's inequality in distance indexing: The first set of
experiments evaluate its filtering power (using pivot filtering), while the
second evaluates its recall rates for non-Ptolemaic metrics, for possible use
in approximate search.

\subsection{Filtering Power for Ptolemaic Metrics}\label{sec:filtering}

The first set of experiments were designed to compare the filtering power of
the Ptolemaic and triangular approaches. \fig~\ref{fig:filtering} shows the
results for pivot filtering, with both the triangular and Ptolemaic lower
bounds.

The first data set consisted of uniformly random 10-dimensional
vectors, and the queries used for evaluation were drawn from the same
distribution.
For the second set, 20-dimensional data and query vectors were drawn in equal
proportion from ten gaussian clusters (with $\sigma^2=0.1$), centered around
vectors drawn uniformly random from the unit hypercube. This clustering
approach is similar to that used by \citet{Zezula:1996}.
The last data set consisted of
64-dimensional image histograms under a quadratic form distance
similar to
that described by \citet{Hafner:1995}, using Euclidean distance in
\textsc{rgb} space as the basis for the weight matrix.\footnote{The matrix
used was $a_{ij}=1-d_{ij}/d_{\textrm{max}}$, where $i$ and $j$ are histogram
bins.} The histograms themselves were constructed by posterizing random images
from an online image repository~\citep{flickr}. For the first two data sets,
the data set consisted of $100\,000$ vectors, with $20$ and $50$ pivots,
respectively. For the last data set, the~100 queries and~20 pivots were
sampled without replacement from the original set of $10\,000$ images, and
removed before the searches were performed.
The pivot filtering was performed directly on a precomputed distance table
in the style of \textsc{laesa}~\citep{Mico:1994}. In addition to the full
Ptolemaic pivot filtering described in \sect~\ref{sec:pivots}, a limited
version was tested with only the $n-1$ consecutive pairs of pivots, in an
arbitrary ordering, giving $2n-2$ bounds of the form of (\ref{eq:pt}), because
the absolute value was used. This gives a \textsc{cpu} use closer to that of
traditional pivoting. Of course, there is a range of settings available here,
from a linear number of bounds, to the quadratic number used by the full
filter.
The plots in \fig~\ref{fig:filtering} show the total number distance
computations needed beyond the number of pivots (i.e., the number of objects
not eliminated).
The search radius (the horizontal axis) is described using the
number of objects it encompasses beyond the query (10--50). In each case, $m$
is the number of
pivots used. The results were averaged over~100 random queries.
As can be seen, for these cases the full Ptolemaic approach clearly
outperforms the triangular, with the partial Ptolemaic filtering
also offering an advantage of varying significance.

\pgfplotstableread{dcmp_unif_10d_k01-10.dat}\dcmpa
\pgfplotstableread{dcmp_unif_10d_k10-50.dat}\dcmpb
\pgfplotstableread{dcmp_clus_20d_k01-10.dat}\dcmpe
\pgfplotstableread{dcmp_clus_20d_k10-50.dat}\dcmpf
\pgfplotstableread{dcmp_imgs_k01-10.dat}\dcmpg
\pgfplotstableread{dcmp_imgs_k10-50.dat}\dcmph

\begin{figure}
\begin{center}
    \subfigure[$L_2$, uniform, $\R^{10}$, $m=20$, $n=100\,000$]{\begin{tikzpicture}
    \begin{axis}[
        small,
        legend pos=north west,
        scaled y ticks=base 10:-4,
        cycle list name=black white,
        legend style={
            at={(0.03,0.96)},
            anchor=north west,
            font=\footnotesize,
            draw=none,
            fill=none
            },
        legend cell align=left,
        ]
        \addplot
            table[mark=square,y=met,x=k] {\dcmpa};
        \addplot+[dashed,every mark/.append style={solid,fill=white}]
            table[y=cpt,x=k] {\dcmpa};
        \addplot+[mark=square*]
            table[y=pto,x=k] {\dcmpa};
        \legend{Triangular,Partial~Ptol.,Ptolemaic}
\end{axis}
\end{tikzpicture}
\qquad
\begin{tikzpicture}
    \begin{axis}[
        small,
        legend pos=north west,
        scaled y ticks=base 10:-4,
        cycle list name=black white,
        ]
        \addplot
            table[mark=square,y=met,x=k] {\dcmpb};
        \addplot+[dashed,every mark/.append style={solid,fill=white}]
            table[y=cpt,x=k] {\dcmpb};
        \addplot+[mark=square*]
            table[y=pto,x=k] {\dcmpb};
\end{axis}
\end{tikzpicture}\label{fig:piv:a}}\\
\subfigure[$L_2$, clustered, $\R^{20}$, $m=50$, $n=100\,000$]{\begin{tikzpicture}
    \begin{axis}[
        small,
        legend pos=south east,
        scaled y ticks=base 10:-4,
        cycle list name=black white,
        ]
        \addplot
            table[mark=square,y=met,x=k] {\dcmpe};
        \addplot+[dashed,every mark/.append style={solid,fill=white}]
            table[y=cpt,x=k] {\dcmpe};
        \addplot+[mark=square*]
            table[y=pto,x=k] {\dcmpe};
        \addplot[draw=none] coordinates {(1,0)};
\end{axis}
\end{tikzpicture}%
\qquad
    \begin{tikzpicture}
    \begin{axis}[
        small,
        legend pos=north west,
        scaled y ticks=base 10:-4,
        cycle list name=black white,
        ]
        \addplot
            table[mark=square,y=met,x=k] {\dcmpf};
        \addplot+[dashed,every mark/.append style={solid,fill=white}]
            table[y=cpt,x=k] {\dcmpf};
        \addplot+[mark=square*]
            table[y=pto,x=k] {\dcmpf};
        \addplot[draw=none] coordinates {(10,0)};
\end{axis}
\end{tikzpicture}\label{fig:piv:b}}\\
\subfigure[QFD, image histograms, $\R^{64}$, $m=20$, $n=9880$]{\begin{tikzpicture}
    \begin{axis}[
        small,
        legend pos=south east,
        scaled y ticks=base 10:-3,
        cycle list name=black white,
        ]
        \addplot
            table[mark=square,y=met,x=k] {\dcmpg};
        \addplot+[dashed,every mark/.append style={solid,fill=white}]
            table[y=cpt,x=k] {\dcmpg};
        \addplot+[mark=square*]
            table[y=pto,x=k] {\dcmpg};
        \addplot[draw=none] coordinates {(1,0)};
\end{axis}
\end{tikzpicture}%
\qquad
    \begin{tikzpicture}
    \begin{axis}[
        small,
        legend pos=north west,
        scaled y ticks=base 10:-3,
        cycle list name=black white,
        ]
        \addplot
            table[mark=square,y=met,x=k] {\dcmph};
        \addplot+[dashed,every mark/.append style={solid,fill=white}]
            table[y=cpt,x=k] {\dcmph};
        \addplot+[mark=square*]
            table[y=pto,x=k] {\dcmph};
        \addplot[draw=none] coordinates {(10,0)};
\end{axis}
\end{tikzpicture}\label{fig:piv:c}}

\caption{
Distance computations ($y$-axis) for range search with $m$ pivots over $n$
objects, radii covering 1--50 neighbors ($x$-axis), averaged over 100 random
queries. The clustered data consisted of~10 Gaussian clusters, each with
$\sigma^2=0.1$. The queries for the uniform case were uniform. For the
clusters,~10 queries were generated from each cluster distribution; for the
image histograms, the pivots and queries were sampled without replacement, and
removed from the original~10\,000 vectors.}\label{fig:filtering}
\end{center}
\end{figure}
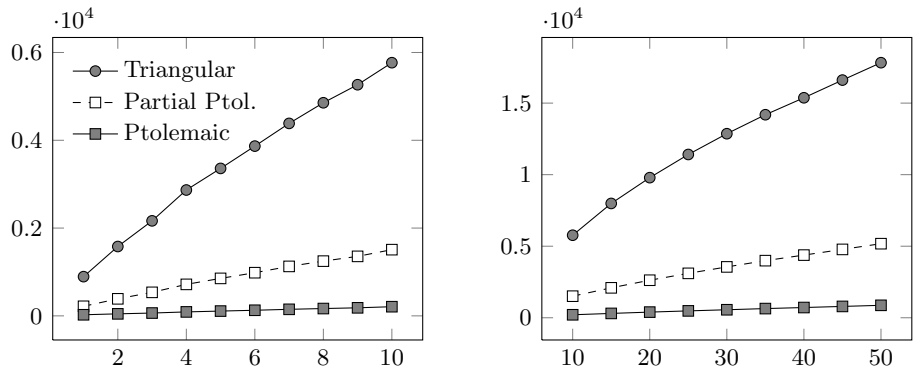
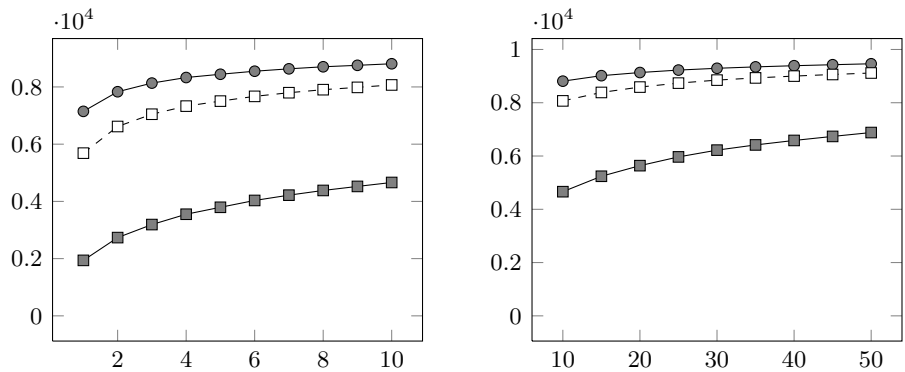
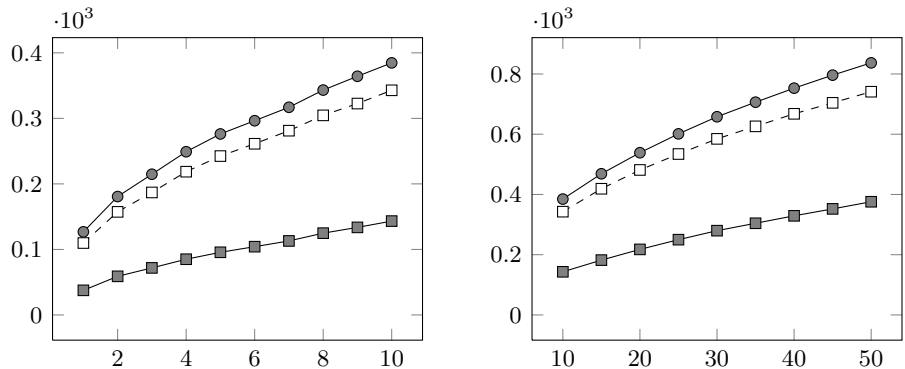

It is worth noting that the Ptolemaic filters used in
\fig~\ref{fig:filtering} are \emph{pure}, that is, not combined with a
triangular bound. \fig~\ref{fig:filtercontrib} illustrates the relative
contribution of the triangular and Ptolemaic pivoting bounds, if both are
used in range search where the radius covers the ten nearest objects, on
$100\,000$ uniformly random ten-dimensional vectors, averaged over $100$
queries. Queries and pivots were drawn from the same distribution as the data.
The filtering power is measured as the number of objects eliminated by the
given bounds during search.
The four regions represent
\begin{inparaenum}[(\itshape 1\upshape)]
\item objects filtered out only by the Ptolemaic bound,
\item objects that both bounds are able to filter out,
\item objects filtered out only by the triangular bound, and
\item objects that neither bound manages to disqualify.
\end{inparaenum}
A progression similar to that in \fig~\ref{fig:filtercontrib} was also found
for clustered data, with a somewhat less pronounced difference (data not
shown).

\pgfplotstableread{filpow_10k_10d_2-10p.dat}\pfa
\pgfplotstableread{filpow_10k_10d_10-50p.dat}\pfb

\pgfplotstableread{filpow_10k_20d_10-50p.dat}\pfc

\begin{figure}
\begin{center}
    \subfigure{\begin{tikzpicture}
    \begin{axis}[
        xmin=0,
        small, xbar stacked,
        ylabel=Pivots,
        xlabel={Filtered, $\R^{10}$}
        ]
        \addplot[fill=darkgray]
            table[x=pto,y=piv] {\pfa};
        \addplot[fill=gray]
            table[x=both,y=piv] {\pfa};
        \addplot[fill=lightgray]
            table[x=met,y=piv] {\pfa};
        \addplot[fill=white]
            table[x=none,y=piv] {\pfa};
\end{axis}
\end{tikzpicture}}
\qquad
\subfigure{\begin{tikzpicture}
    \begin{axis}[
        xmin=0,
        small, xbar stacked,
        xlabel={Filtered, $\R^{20}$},
        area legend,
        legend style={
        font=\footnotesize,
        at={(0.03,0.96)},
        anchor=north west,
             draw=none,
        },
        legend cell align=left,
        ]
        \addplot[fill=darkgray]
            table[x=pto,y=piv] {\pfc};
        \addplot[fill=gray]
            table[x=both,y=piv] {\pfc};
        \addplot[fill=lightgray]
            table[x=met,y=piv] {\pfc};
        \addplot[fill=white]
            table[x=none,y=piv] {\pfc};
        \legend{Ptolemaic,Both,Triangular,Neither}
\end{axis}
\end{tikzpicture}}
\caption{Filtering contributions on uniformly random vectors, for the given
number of random pivots. The shaded areas represent the number of objects that
were filtered exclusively by the triangular and Ptolemaic bounds, by both, or
by neither.}\label{fig:filtercontrib}
\end{center}
\end{figure}
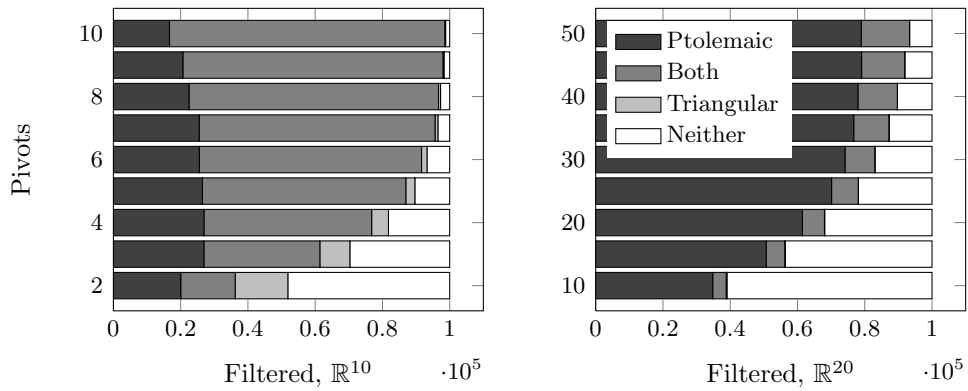

\begin{figure}
\begin{center}
\subfigure{\begin{tikzpicture}
    \begin{axis}[
        ylabel=Filtered,
        xlabel=Pivots,
        cycle list name=black white,
        legend style={
            at={(0.03,0.96)},
            anchor=north west,
            font=\footnotesize,
            draw=none,
            fill=none
            },
        legend cell align=left,
        ]
        \addplot
            table[mark=square,y expr=\thisrow{met}+\thisrow{both},x=piv] {\pfc};
        \addplot+[dashed,every mark/.append style={solid,fill=white}]
            table[y=cpt,x=piv] {\pfc};
        \addplot+[mark=square*]
            table[y expr=\thisrow{pto}+\thisrow{both},x=piv] {\pfc};
        \addplot[draw=none] coordinates {(10,0) (50,100000)};
        \legend{Triangular,Partial Ptolemaic,Ptolemaic}
\end{axis}
\end{tikzpicture}}
\caption{Individual filtering power for uniformly random vectors
from $\R^{20}$. The $y$-axis is the number of objects filtered by each of the
three bounds individually, independently of the other two, for the given number
of random pivots.} \label{fig:filterseparate}
\end{center}
\end{figure}
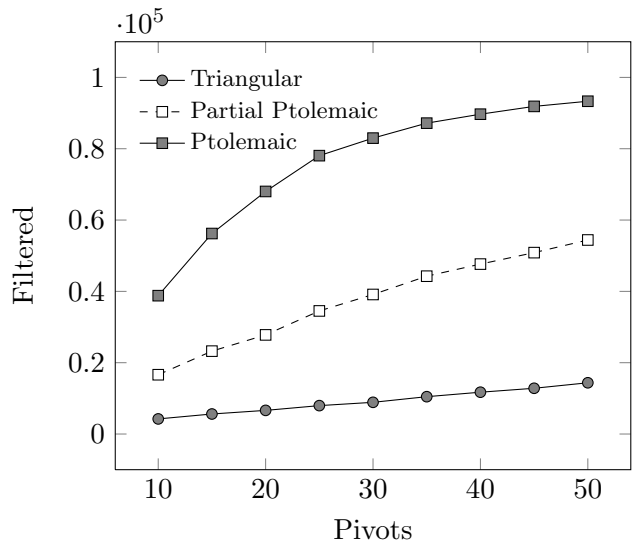

Even for a relatively modest number of pivots, the filtering power of the
Ptolemaic bound is high, and the difference in the exclusive contributions of
the two bounds is surprisingly large, growing to the extreme for higher
dimensions.
This can perhaps be seen even more clearly in \fig~\ref{fig:filterseparate},
where the individual (non-exclusive) filtering powers of the triangular,
partial Ptolemaic and full Ptolemaic methods are plotted independently.

\subsection{Approximation Rates for Non-Ptolemaic Metrics}\label{sec:approx}

Although I focus on exact search in this paper, the ideas may be applicable
to approximate search as well, and could perhaps also be used with merely
\emph{approximately} Ptolemaic distances.
While any metric
can be made Ptolemaic, as mentioned in \sect~\ref{sec:basic}, the resulting
metric may be harder to index, and approximate indexing of the original may
still be useful. Table~\ref{tab:approx} shows results for several data sets
and distances. The three types of objects used were vectors, sets and strings.
The vectors were generated as described in \sect~\ref{sec:filtering}. The sets
were generated randomly for a given maximum cardinality by including or
excluding each object with equal probability. The string data sets were a list
of $234\,936$ words from Webster's Second International Edition, as
distributed with the Macintosh OS version 10.5, and
the lines of \emph{A Tale of Two Cities} by Charles Dickens,
with short lines (fewer than six characters) stripped away. The latter data
set is similar to one used by \mbox{\citet{Brin:1995}}.
For the vector data sets, various $L_p$ norms were used, as well as the
angular pseudometric. For the sets, the related Hamming and Jaccard distances
were used (the cardinality of the symmetric difference, and proportion of the
symmetric difference to the union, respectively), while for the strings,
Levenshtein distance (edit distance) was used.
Each experiment was averaged over ten runs. Except for the string data sets,
which were static, new data sets were randomly generated for each run. A run
consisted of $10\,000$ randomly sampled quadruples (that is, sets of four
distinct objects), and the proportion of the quadruples that satisfied
Ptolemy's inequality was computed.

As can be seen from Table~\ref{tab:approx}, the much-used $L_p$ norms vary in
their approximation rates from 99\%
(with $L_2$, of course having an exact rate of 100\%
pseudometric does very well for the high-dimensional vectors.
The set distances also seem to conform to Ptolemy's inequality to a high
degree for the 20-dimensional case, and the edit distance has a very low rate
of inequality violations (with the Dickens data set outperforming the Webster
data set by several orders of magnitude).
For the $L_p$ spaces and the angular pseudometric, the intrinsic
dimensionality
was generally close to the actual number of dimensions (slightly higher for
uniformly random vectors, and varying with $p$).
The dimensionalities for the Hamming spaces were equal to about half the set
cardinalities, while those for Jaccard distance were about 50\%
intrinsic dimensionality of the Webster data set was about nine, while that
for the Dickens data set was about forty.
\newcommand{\T}{\phantom{\E{-1}}}
\begin{table}[!t]
\begin{threeparttable}
\caption{Results from approximation experiments. The values are
average Ptolemaic proportions of quadruples, as well as standard
deviations. Subscripts on numbers represent powers of ten. $\R^k$ refers to
vectors of dimension $k$, while $2^{\Z_{k}}$ are sets of cardinality
$k$.}\label{tab:approx}
\begin{tabularx}{\linewidth}{Xcccccc}
\toprule
\bf Distance & \multicolumn{2}{c}{Clustered, $\R^5$}
             & \multicolumn{2}{c}{Clustered, $\R^{10}$}
             & \multicolumn{2}{c}{Uniform,   $\R^{10}$} \\
             \cmidrule(r){2-3} \cmidrule(r){4-5} \cmidrule(r){6-7}
              & $\mu$ & $\sigma$ & $\mu$ & $\sigma$ & $\mu$ & $\sigma$ \\
\midrule
\makebox[1cm][l]{$L_1$}Manhattan distance
             & 0.98 & 1.70\E{-3} & 1.00 & 3.81\E{-4} & 1.00 & 8.63\E{-4} \\
\makebox[1cm][l]{$L_2$}Euclidean distance\tnote{$*$}
             & 1.00 & ---        & 1.00 & ---        & 1.00 & ---        \\
$L_3$        & 0.99 & 6.39\E{-4} & 1.00 & 1.11\E{-4} & 1.00 & 4.58\E{-5} \\
$L_5$        & 0.98 & 1.41\E{-3} & 1.00 & 6.03\E{-4} & 1.00 & 2.14\E{-4} \\
$L_{10}$     & 0.97 & 1.39\E{-3} & 0.99 & 1.31\E{-3} & 1.00 & 4.94\E{-4} \\
$L_{100}$    & 0.96 & 2.01\E{-3} & 0.98 & 1.49\E{-3} & 1.00 & 5.27\E{-4} \\
\makebox[1cm][l]{$L_\infty$}Chebyshev distance
             & 0.96 & 2.09\E{-3} & 0.98 & 1.45\E{-3} & 1.00 & 5.48\E{-4} \\
\makebox[1cm][l]{$\theta$}Angular distance & 0.99
             & 5.78\E{-4} & 1.00 & 1.19\E{-4} & 1.00 & 4.00\E{-5}\\
\midrule
                   & \multicolumn{2}{c}{Uniform, $2^{\Z_{10}}$}
                     & \multicolumn{2}{c}{Uniform, $2^{\Z_{20}}$} \\
                    \cmidrule(r){2-3} \cmidrule(r){4-5}
Hamming distance                 & 0.93 & 2.69\E{-3} & 0.98 & 9.57\E{-4} \\
Jaccard distance                 & 0.99 & 8.90\E{-4} & 1.00 & 1.60\E{-4} \\
\midrule
                   & \multicolumn{2}{c}{Webster}
                     & \multicolumn{2}{c}{Dickens} \\
                    \cmidrule(r){2-3} \cmidrule(r){4-5}
Levenshtein distance             & 1.00 & 7.81\E{-5} & 1.00 & 0.00 \\
\bottomrule
\end{tabularx}
\begin{tablenotes}
\footnotesize\item[$*$] Euclidean distance is Ptolemaic, but is included for
completeness
\end{tablenotes}
\end{threeparttable}
\end{table}

\section{Discussion and Future Work}

In summary, this paper has presented a new distance indexing principle, based
on Ptolemy's inequality, which seems to result in greatly increased filtering
power, in the cases where it is applicable. For vector spaces (in particular,
$\R^k$), it is directly applicable to quadratic form distances, including
Euclidean distance. The cost of the filtering depends on the number of bounds
used for a given number of pivots, growing from linear, as with triangular
filtering, up to quadratic, providing a spectrum of options which could form
the basis of a tradeoff between filtering cost and the performance gained by
eliminating distance computations. The exact tradeoff will of course depend on
the computational cost of the distance function. The quadratic form distance
is, at the face of it, an excellent candidate in this regard, as it is quite
expensive to compute. However, static quadratic form distances can be
normalized to Euclidean distance, transforming the query vector prior to
search. This leaves dynamic versions, such as the signature quadratic form
distance, as an important application. Results published after the original
version of this paper shows that Ptolemaic indexing is, indeed, currently the
most efficient way to index these distances~\citep{Lokoc:2011,Hetland:2013}.
Beyond such empirical validation in a more realistic setting, there are other
avenues of research that might be interesting to pursue. For example, as
shown, there also several common distances that are \emph{close} to being
Ptolemaic. Whether Ptolemaic indexing of such spaces might turn out to be
useful as an approximate indexing method could be an issue worth investigating
further.
It might also be interesting to examine the basic properties of the Ptolemaic
pivots, as related to the space (including which composition of pivots that
would give the best pivoting results). Some basic engineering work in this
direction has already met with experimental success, such as exploring pivot
pairs in order of distance from the query and the object in question, to
achieve elimination more quickly~\citep{Lokoc:2011,Hetland:2013}.

It seems that higher dimensionalities give Ptolemaic indexing a greater edge
over the triangular one. In the experiments performed here, this may be due to
the increased number of pivots, but even the partial Ptolemaic filtering
increases its lead with with higher dimension and a higher pivot number, which
cannot be explained by the quadratic growth in the number of bounds. This is
certainly an issue worthy of further examination. It also seems like
high-dimensional spaces are more likely to be approximately
Ptolemaic---perhaps because the distances are more similar. Based on this
reasoning, high-dimensional spaces would also be more likely to be
approximately triangular. Taking the square root, which will
increase the intrinsic dimension, will make any metric \emph{exactly}
Ptolemaic. On the other hand, both the Ptolemaic and the triangular bound
will be weaker in these cases, so there is clearly a tradeoff between accuracy
and filtering power. These are issues that could be examined both empirically
and analytically.

\appendix

\section*{Acknowledgements}

The author wishes to thank Ole Edsberg, Jon Marius Venstad and Bilegsaikhan
Naidan for their help in several areas in the preparation of this paper, as
well as the anonymous reviewers for their helpful suggestions.

\bibliography{paper,local}

\end{document}